\newtheorem{prop}{Proposition}
\title{Nonparametric testing of the covariate significance for spatial point patterns under the presence of nuisance covariates}
\author[1]{Ji\v{r}\'{i} Dvo\v{r}\'{a}k}
\author[2]{Tom\'{a}\v{s} Mrkvi\v{c}ka}
\affil[1]{Faculty of Mathematics and Physics, Charles University, Czech Republic}
\affil[2]{Faculty of Economics, University of South Bohemia, Czech Republic}
\begin{document}

\maketitle

\noindent \textbf{Abstract.} Determining the relevant spatial covariates is one of the most important problems in the analysis of point patterns. Parametric methods may lead to incorrect conclusions, especially when the model of interactions between points is wrong. Therefore, we propose a fully nonparametric approach to testing significance of a covariate, taking into account the possible effects of nuisance covariates. Our tests match the nominal significance level, and their powers are comparable with the powers of parametric tests in cases where both the model for intensity function and the model for interactions are correct. When the parametric model for the intensity function is wrong, our tests achieve higher powers. The proposed methods rely on Monte Carlo testing and take advantage of the newly introduced covariate-weighted residual measure. We also define a correlation coefficient between a point process and a covariate and a partial correlation coefficient quantifying the dependence between a point process and a covariate of interest while removing the influence of nuisance covariates.

\vspace{0.25cm}

\noindent \textbf{Keywords:} correlation coefficient, covariate, nonparametric methods, partial correlation coefficient, point process, random shift test, residual analysis

\section{Introduction}
\subsection{Motivation and overview}
Spatial point patterns are often accompanied by spatial covariates. Determining the relevant covariates that influence the positions of points is certainly one of the most important questions of point pattern analysis. Applications include spatial epidemiology, spatial ecology, exploration geology, seismology, and many other fields. 

In this paper, we mainly focus on this question. Our proposed methods use nonparametric tools. The second question that we are interested in is nonparametric quantification of the spatial dependence between a point process and a covariate, both without and with presence of nuisance covariates. We define a correlation coefficient and a partial correlation coefficient between a point process and a covariate. The second problem has not been studied before, to our knowledge.

The first problem is usually solved by parametric methods \citep{Schoenberg2005,WaagepetersenGuan2009,Kutoyants1998,Coeurjolly2013}, see Section~\ref{subsec:parametric_methods} for details. However, we show in our simulation study that even when the parametric model is selected correctly, these tests of covariate significance may lead to liberality. The parametric methods have even bigger problems when: 1) the parametric model for the intensity function is incorrect, or 2) the form of interactions between points is specified incorrectly. We propose here two tests of covariate significance, a fully nonparametric one which avoids both selecting the intensity function model and the interaction model, and a semiparametric one which does not assume an interaction model but uses the log-linear intensity function model as the one predominantly used in practice. These two proposed tests do not exhibit liberality, and their powers are comparable with the powers of parametric methods in cases with correctly specified models for the intensity function and the interactions. The proposed tests also have a higher power than the parametric ones when either the intensity function model or the interaction model is misspecified.

Since the proposed nonparametric tests do not need to choose a specific model and exhibit better properties than parametric methods, their use should become a standard practice in the analysis of point patterns.

For determining relevant covariates one can also use the lurking variable plots \citep{Baddeley2005a} or appropriate information critera \citep{Choiruddin2021} but these do not provide formal tests. The only nonparametric method studying the dependence of a point process and a covariate without nuisance covariates was introduced in \citet{Dvorak2022}.

Throughout the paper, we assume that the spatial covariates are continuous. The methodology is up to a certain extent also applicable for categorical covariates, as discussed in Section~\ref{sec:CD}.

%\textcolor{blue}{resource selection function.}

\subsection{Motivational examples}\label{subsec:motivation}
% popis dat a otazky ktere chceme u nich resit.

To illustrate the relevance of the questions posed above, we consider a part of the tropical tree data set from the Barro Colorado Island plot \citep{Condit1998}. We focus on the positions of 3\,604 trees of the Beilschmiedia pendula species in a rectangular $1\,000 \times 500$ metre sampling plot, plotted in the top left panel of Figure~\ref{fig:BCI}. This part of the data set is available in the \texttt{spatstat} package. Below, we call it the BCI data set.

The intensity of point occurrence in the observation window is clearly nonconstant as the trees tend to prefer specific environmental conditions. The variation in the intensity of point occurrence may possibly be explained by the accompanying covariate information. The available covariates include the terrain elevation and gradient (available in the \texttt{spatstat} package) and the soil contents of mineralised nitrogen, phosphorus and potassium \citep{BCIsoil}, see Figure~\ref{fig:BCI}. Maybe all the covariates bring important information and should be used for inference. However, it is equally possible that some of the covariates bring redundant information (as could be expected from the nitrogen and potassium content in this data set, see the bottom left and bottom right panel of Figure~\ref{fig:BCI}) or that some of the covariates, in fact, do not influence the point process. It is important to determine with high degree of confidence which covariates influence the point process and should be included in the further steps of the inference.

In certain cases, a relevant parametric model can be specified based on the available expert knowledge. However, often no such parametric model is available, or we do not want to take a risk of model misspecification. Then nonparametric methods for covariate selection need to be used. 

\begin{figure}[t]
    \centering
    \includegraphics[width=\textwidth]{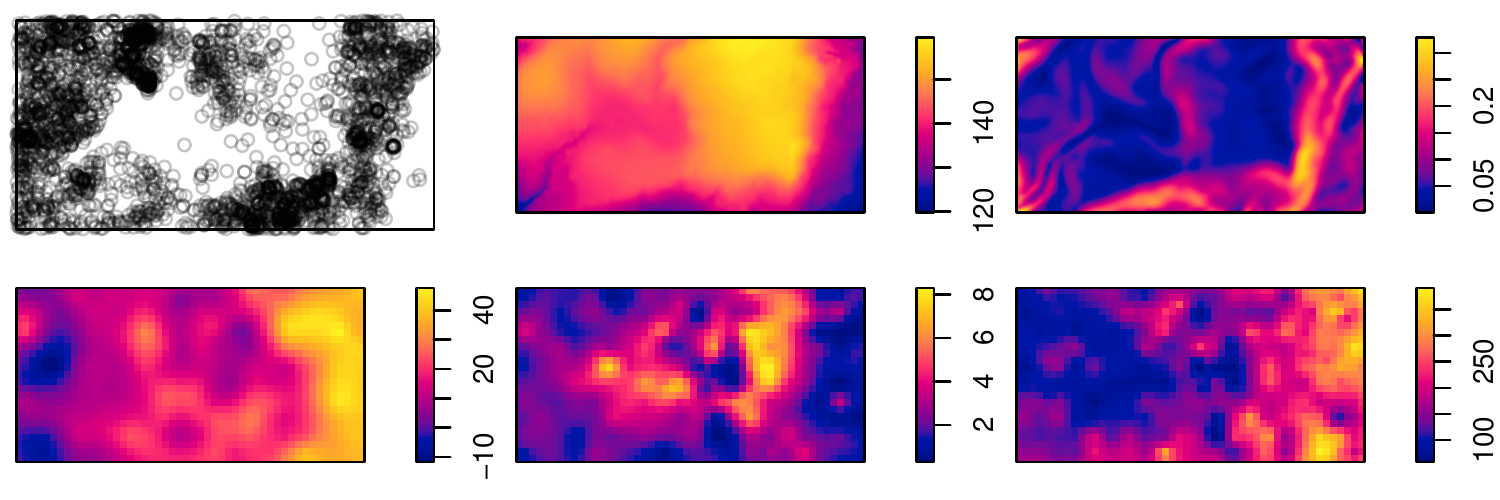}
    \caption{The Barro Colorado Island data set. From left to right, top to bottom: locations of trees, terrain elevation, terrain gradient, the soil contents of nitrogen, phosphorus and potassium.}
    \label{fig:BCI}
\end{figure}

Furthermore, we consider the Castilla-La Mancha forest fire data set, again available in the \texttt{spatstat} package. We study the locations of 689 forest fires that occurred in this region in Spain in 2007, plotted in the left panel of Figure~\ref{fig:CLM}. Below we call it the CLM data set. The size of the region is approximately 400 by 400 kilometers. The intensity of point occurrence is nonconstant and may be influenced by the accompanying covariates (terrain elevation and gradient, see the middle and right panels of Figure~\ref{fig:CLM}). We aim at quantifying the strength of influence of the individual covariates on the point process and comparing it with the BCI data set.

\begin{figure}[t]
    \centering
    \includegraphics[width=\textwidth]{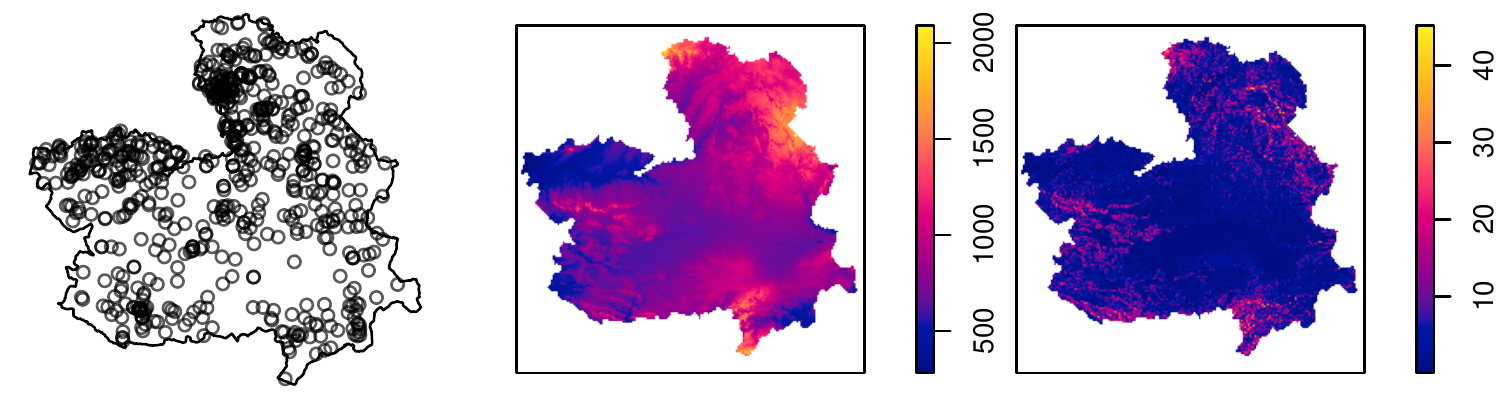}
    \caption{The Castilla-La Mancha data set. From left to right: locations of forest fires, terrain elevation, terrain gradient.}
    \label{fig:CLM}
\end{figure}

\subsection{Outline of the work}
In order to achieve our objectives, we propose to employ the residual analysis \citep{BaddeleyEtal2005} with respect to the model built from the nuisance covariates. The sample (Kendall's) correlation coefficient of the smoothed residual field and the interesting covariate then quantifies their dependence both without and with nuisance covariates. The latter defines the partial correlation.

The testing of covariate significance is proposed to be performed via a new test statistic, the covariate-weighted residual measure, and a Monte Carlo test. The residual analysis can be computed in the parametrical way, which defines our semiparametrical approach, or it can be computed nonparametrically using the nonparametrical estimate of the point pattern intensity \citep{Baddeley2012} and it defines our completely nonparametrical approach. The nonparametric residuals are used  for the first time in this work.  

The replications in the Monte Carlo test are obtained through random shifts  both with torus correction \citep{Lotwick1982} and variance correction \citep{MrkvickaEtAl2020}. The torus correction is a standard method whereas the variance correction was recently defined, and it allows to use nonrectangular windows and it better controls the level of the test than the torus correction. 

The paper is organised as follows. Section~\ref{sec:NB} recalls all the concepts we need to define our procedures. Section~\ref{sec:NM} describes all new methods we are introducing in this work. That is, nonparametric residuals, spatial (partial) correlation coefficient, covariate-weighted residual measure, and tests of covariate significance with nuisance covariate. Section \ref{sec:simulations} contains a simulation study in which the exactness and power of our nonparametrical methods is compared with parametrical methods. Section \ref{sec:application_BCI} contains an example of the usage of our methods for nonparametric selection of relevant covariates. Section \ref{sec:application_CLM}  contains an example of usage of our methods for comparison of dependence strength. Finally, Section~\ref{sec:CD} is left for conclusions and discussion. 

The \texttt{R} codes providing an implementation of the proposed methods are available at \url{https://msekce.karlin.mff.cuni.cz/~dvorak/software.html} and will be available in the planned package \texttt{NTSS} for \texttt{R}.

\section{Notation and background} \label{sec:NB}
% Poznamky z Budejovic: point process, covariates, intensity function

Let $X$ be a point process on $\mathbb{R}^2$ with the intensity function $\lambda(u)$. Throughout this paper, we assume that the intensity function of $X$ exists. Let $C_1, C_2, \ldots, C_{m+1}$ be the covariates in $\mathbb{R}^2$. 
Denote by $W \subset \mathbb{R}^2$ a compact observation window with area $|W|$ and $n(X \cap B)$ the number of points of the process $X$ observed in the set $B$. We assume that the values of the covariates are available in all points of $W$, at least on a fine pixel grid. This can be achieved from a finite set of observations, e.g. by kriging techniques.

\subsection{Covariate selection in parametric point process models}\label{subsec:parametric_methods}
% Poznamky z Budejovic: ppm - likelihood poisson, pseudolikelihood; kppm - significance je resena asymptotickym rozptylem; ppm, kppm maji pravdepodobnostni zaklad, poskytuji p-hodnotu, proto se s nimi porovnavame; AIC, BIC, Rasmusovo zobecneni - take vyber mezi vice modely, ale musi uz byt specifikovan model

The dependence of the intensity function of a point process on the covariates $C_1, \ldots, C_m$ is often modelled parametrically, e.g. using the log-linear model
\begin{align}\label{eq:lambda_loglinear}
    \lambda(u; \beta) = \exp\{\beta_0 + \beta_1 C_1(u) + \ldots + \beta_m C_m(u)\}.
\end{align}
The standard approach to estimating the model parameters $\beta_i$ is to maximize the Poisson likelihood \citep{Schoenberg2005,WaagepetersenGuan2009}. This corresponds to the maximum likelihood approach for Poisson models, while for non-Poisson models, this constitutes a first-order composite likelihood approach. For the log-linear model \eqref{eq:lambda_loglinear} the estimation is implemented in the \texttt{ppm} function from the popular \texttt{spatstat} package \citep{baddeley2015spatialR}.

For Poisson or Gibbs processes, the \texttt{ppm} function also provides confidence intervals for the regression parameters $\beta_i$ and the p-values of the tests of the null hypothesis that $\beta_i = 0$ for a given $i$, based on the asymptotic variance matrix \citep{Kutoyants1998,CoeurjollyRubak2013}.
% PODROBNEJSI KOMENTARE, ALE MYSLIM NEJSOU POTREBA: For models fitted by the Baddeley-Turner approximation (the default choice in the \texttt{ppm} function), the asymptotic variance matrix is determined either by the standard asymptotic theory as in \cite{Kutoyants1998} (for Poisson processes) or following \cite{CoeurjollyRubak2013} (for other Gibbs processes).
For cluster processes, the \texttt{kppm} function from the \texttt{spatstat} package provides means of model fitting. The regression parameters $\beta_i$ from \eqref{eq:lambda_loglinear} are again estimated using the \texttt{ppm} function, but the asymptotic variance matrix is determined according to \citet{Waagepetersen2008}, taking into account the attractive interactions between points.

The methods discussed above provide means for formal testing of the hypothesis that $\beta_i = 0$ for a given $i \in \{1, \ldots, m\}$, allowing one to select the set of relevant covariates to be included in the model.

\subsection{Parametric residuals for point processes}
% Poznamky z Budejovic: smooth residua

Residuals can be used to check whether the fitted model for the intensity function is appropriate, see \cite{BaddeleyEtal2005} or \citet[Sec.~11.3]{baddeley2015spatialR}. In the following we employ the version of residuals based on the intensity function, as suggested by R. Waagepetersen in the discussion to the paper \cite{BaddeleyEtal2005}, rather than based on the conditional intensity function as discussed in the paper itself. Let $\hat{\beta}$ be the vector of the estimated regression parameters. The \emph{residual measure} is defined as
\begin{align}\label{eq:paramR}
    \mathcal{R}(B) = n(X \cap B) - \int_B \lambda(u;\hat{\beta}) \, \mathrm{d}u,
\end{align}
where $B \subseteq W$ is a Borel set. The \emph{smoothed residual field} is obtained as
\begin{align}\label{eq:paramSRF}
    s(u) = \frac{1}{e(u)} \left[ \sum_{x_i \in X \cap W} k(u-x_i) - \int_W k(u-v) \lambda(v;\hat{\beta}) \, \mathrm{d}v \right],
\end{align}
where $e(u) = \int_W k(u-v) \, \mathrm{d}v$ is the edge-correction factor  %\citep[Sec.~11.3.2]{baddeley2015spatialR}
and $k$ is a probability density function in $\mathbb{R}^2$. In fact, the first term in \eqref{eq:paramSRF} gives the nonparametric kernel estimate of the intensity function, the covariates not being taken into account, while the second term gives the smoothed parametric estimate which incorporates the covariates. If the estimated model $\lambda(v;\hat{\beta})$ describes the point process $X$ well, the smoothed residual field $s(u)$ is expected to fluctuate around 0. Its deviations from 0 indicate a disagreement between $\lambda(v;\hat{\beta})$ and the true intensity function in the corresponding parts of the observation window. We remark that the residuals described above are the \emph{raw residuals} of \citet{BaddeleyEtal2005}, where scaled versions of the residuals are also considered.

\subsection{Nonparametric estimation of the intensity function depending on covariates}\label{subsec:nonpar_rho}

As opposed to fitting a parametric model such as \eqref{eq:lambda_loglinear}, the dependence of the intensity function on a set of covariates can be captured nonparametrically. \cite{Baddeley2012} assume that there is an unknown function $\rho: \mathbb{R}^m \rightarrow [0,\infty)$ such that
$\lambda(u) = \rho(C_1(u), \ldots, C_m(u)).$
Assuming absolute continuity of the distribution of the vector of covariates $(C_1(u), \ldots, C_m(u))$ on $\mathbb{R}^m$, the function $\rho$ can be estimated using kernel smoothing in the space of covariate values, see \citet{Baddeley2012} or \citet[Sec.~6.6.3]{baddeley2015spatialR}. This opens up the possibility to define the nonparametric residuals in Section~\ref{subsec:nonparametric_residuals}.

The estimation of $\rho$ is implemented in the \texttt{rhohat} function from the \texttt{spatstat} package for $m=1$ and in the \texttt{rho2hat} function for $m=2$. We note that in these two cases, visualization of $\hat \rho$ is straightforward while it is not as easy for $m > 2$. In our simulation experiments in Section~\ref{sec:simulations} we use the \texttt{spatstat} implementation, while in the analysis of the real data sets with higher number of covariates we use our implementation based on the \texttt{ks} package \citep{ks_package}.

\subsection{Monte Carlo tests}
% Poznamky z Budejovic: exchangeability

When the distribution of a test statistic is too complicated to be derived analytically but there is a way of obtaining replications (simulations, permutations, \ldots) of the data under the null hypothesis, it is possible to perform a formal test of the null hypothesis using the Monte Carlo approach \citep{DavisonHinkley1997}. %This is very often the case in spatial statistics.
This approach relies on the exchangeability of the vector $(T_0, T_1, \ldots, T_N)$, where $T_0$ is the test statistic value computed from the observed data, and $T_1, \ldots, T_N$ are obtained from the replications.

The test is performed by determining how typical or extreme the value $T_0$ is in the whole sample $T_0, T_1, \ldots, T_N$. For univariate test statistics, this means determining the rank of $T_0$, however, using functional test statistics is also possible if a suitable ranking of the functions from the most typical to the most extreme is available, as e.g. in \citet{MyllymakiEtal2017}. Excheangeability (invariance of the distribution with respect to permutations of the components) ensures that the Monte Carlo test matches the required significance level.

\subsection{Random shift permutation strategy}\label{sec:RStests}

Random shifts provide means of nonparametric testing of independence between a pair of spatial objects, such as a pair of random fields \citep{UptonFingleton1985,DaleFortin2002} or a pair of point processes \citet{Lotwick1982}. By randomly shifting one of the objects while keeping the other one fixed, any possible dependence between them is broken. At least one of the spatial objects must be assumed to be stationary. By performing a certain amount of shifts along randomly generated vectors, one obtains replications for performing a Monte Carlo test of independence.

Assume that the spatial objects are denoted by $\Phi$ and $\Psi$ and we observe them in the window $W$. We denote the value of the test statistic computed directly from the observed data by $T_0 = T(\Phi, \Psi; W)$. After producing $N$ random shift vectors $v_1, \ldots, v_N$ we compute the value of the test statistic $T_i$ from $\Phi$ and $\Psi$ shifted by $v_i$, i.e. $T_i = T(\Phi, \Psi + v_i; W)$, $i = 1, \ldots, N$. Clearly, some part of $\Psi$ will be shifted outside of the observation window $W$ and part of $\Psi + v_i$ will not overlap with $\Phi$ anymore. Hence, some form of correction is needed.

\subsubsection{Torus correction}
% Poznamky z Budejovic: The simulation studies shows that exchangeability is not fullfilled but if the autocorrelation is not strong the method controls significance level

For a rectangular window $W$, one may identify its opposing edges, creating a toroidal geometry on $W$ \citep{Lotwick1982,UptonFingleton1985}. We denote by $[\Psi + v_i]$ the version of $\Psi$ shifted with respect to the toroidal geometry, as opposed to $\Psi + v_i$ which denotes $\Psi$ shifted with respect to the Euclidean geometry. The replications $T_i$ are then obtained as $T_i = T(\Phi, [\Psi + v_i]; W)$, $i = 1, \ldots, N$.

As a result, all parts of the data are used for computing $T_i$. On the other hand, artificial cracks appear in the correlation structure of the data, as parts of the data originally far away are now ``glued together''. This means that exchangeability is violated, which in turn introduces liberality of the random shift tests \citep{FortinPayette2002,MrkvickaEtAl2020}. However, simulation studies show that when the spatial autocorrelations in the data are not very strong, the tests match the nominal significance level quite closely \citep{MrkvickaEtAl2020,Dvorak2022}. Traditionally, the distribution of the random shift vectors is taken to be the uniform distribution on $W$, but other choices are also possible.

\subsubsection{Variance correction}\label{subsubsec:variance_correction}
% Poznamky z Budejovic: Closer to excheangeability, the univariate, marginal distributions have the same expectation and variance. The simulation studies shows that It conrols the significance level also in case of strong autocorrelation. Vyber shift vektoru - je to kompromis, aby byly dost dlouhe, ale nezahazoval jsem moc dat.

To remove the liberality of the torus correction, \citet{MrkvickaEtAl2020} proposed the \emph{variance correction}. It uses shifts respecting the Euclidean geometry and discards those parts of the data that are shifted outside of $W$. No artificial cracks are introduced to the correlation structure of the data, removing the liberality of the random shift tests. Also, irregular observation windows can be considered. On the other hand, different amounts of data are dropped for different shift vectors $v_i$ and for typical choices of the test statistic the variance of $T_i$ varies greatly, making it impossible to perform the Monte Carlo test directly. Therefore, the variance of $T_i$ needs to be standardized before performing the test.

Formally, we denote by $W_i$ the smaller observation window where $\Phi$ and $\Psi + v_i$ overlap, i.e. $W_i = W \cap (W + v_i)$. The value $T_i$ is computed from $\Phi$ and $\Psi + v_i$ restricted to $W_i$, specifically as $T_i = T(\Phi|_{W_i},(\Psi+v_i)|_{W_i};W_i)$. The values $T_0, T_1, \ldots, T_N$ are then standardized to have zero mean and unit variance. This is achieved by subtracting the mean $\overline{T} = \frac{1}{N+1} \sum_{i=0}^N T_i$ and dividing by the square root of the variance:
$
    S_i = \left( T_i - \overline{T} \right) / \sqrt{\text{var} (T_i)}.
$
The standardized values $(S_0, S_1, \ldots, S_N)$ are closer to exchangeability than $(T_0, T_1, \ldots, T_N)$ because their first two moments are the same. The standardized values are used to perform the Monte Carlo test. When a formula describing $\text{var}(T_i)$ as a function of the size of $W_i$ is known, at least asymptotically, it can be directly used in the standardization. If such a formula is not available, \citet{MrkvickaEtAl2020} suggest a kernel regression approach to estimating $\text{var} (T_i)$.

Simulation studies in \citet{MrkvickaEtAl2020,Dvorak2022} show that the random shift tests with variance correction match the nominal significance level even in the case of strong autocorrelation. In those papers, the shift vectors followed the uniform distribution on a disc with radius $R$ centered at the origin. The choice of $R$ is a compromise between two goals: longer shifts are more relevant for breaking the possible dependence between $\Phi$ and $\Psi$ while shorter shifts mean that a larger amount of available data is used to compute $T_i$. Choosing $R$ so that $|W_i| / |W| \geq 1/4$ for all $i$ turned out to provide satisfactory results.

\subsection{Nonparametric testing of dependence between point process and a covariate}\label{subsec:PCtest}

For nonparametric testing of the null hypothesis of independence between a point process $X$ and a covariate $C_1$ the paper \citet{Dvorak2022} suggests to use the random shift test with the test statistic $T = \frac{1}{n(X \cap B)} \sum_{x_i \in X \cap W} C_1(x_i)$, i.e. the mean covariate value observed at the points of the process. This test showed liberality (with torus correction) or slight conservativeness (with variance correction) in the simulation studies in \citet{Dvorak2022}, with both versions having much higher power than the other tests considered there.

\section{New methods} \label{sec:NM}
\subsection{Nonparametric residuals for point processes}\label{subsec:nonparametric_residuals}

As discussed in Section~\ref{subsec:nonpar_rho}, a nonparametric estimate of the intensity function $\hat\lambda(u) = \hat\rho(C_1(u), \ldots, C_m(u))$ can be used to describe its dependence on the set of covariates. Using $\hat\rho$, the nonparametric version of the residual measure \eqref{eq:paramR} can be defined as
\begin{align}\label{eq:nonparR}
    \tilde{\mathcal{R}}(B) = n(X \cap B) - \int_B \hat{\rho}(C_1(u), \ldots, C_m(u)) \, \mathrm{d}u.
\end{align}
The corresponding nonparametric smoothed residual field is then
\begin{align}\label{eq:nonparSRF}
    \tilde{s}(u) = \frac{1}{e(u)} \left[ \sum_{x_i \in X \cap W} k(u-x_i) - \int_W k(u-v) \hat{\rho}(C_1(u), \ldots, C_m(u)) \, \mathrm{d}v \right].
\end{align}
Again, scaled versions of these residuals can be constructed as in \citet{BaddeleyEtal2005}. If $\hat{\rho}(C_1(u), \ldots, C_m(u))$ describes the intensity function of $X$ well, meaning e.g. that no relevant covariate was left out, $\tilde{s}(u)$ is expected to fluctuate around 0. %Deviations from 0 indicate disagreement between the estimated model and the true intensity function of $X$.
Figure~\ref{fig:nonpar_residuals} illustrates that $\hat{\rho}$ is capable of capturing the correct form of dependence even without specifying a parametric model.

\begin{figure}[t]
    \centering
    \includegraphics[width=\textwidth]{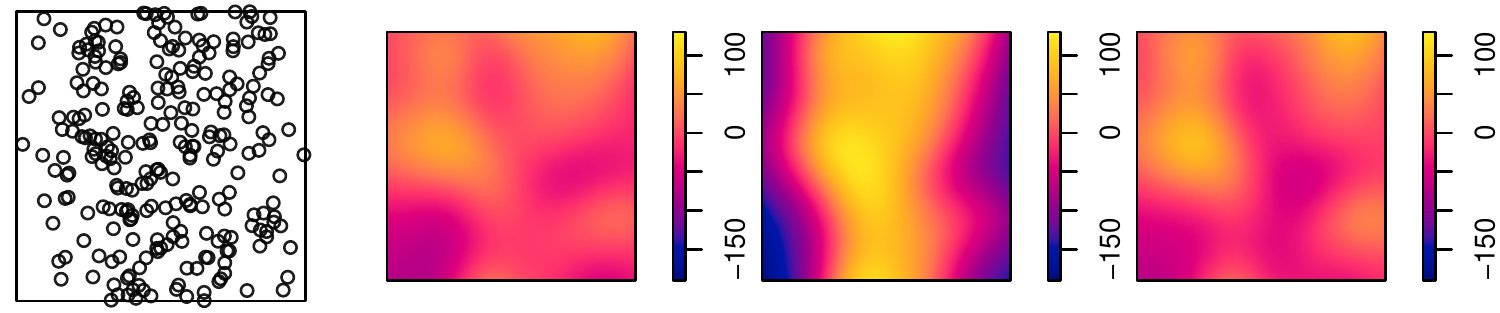}
    \caption{Left to right: realization of the Poisson process on $[0,1]^2$ with intensity function $\lambda(x,y) = 400 ( 1 - 4 (x-1/2)^2)$, the nonparametric smoothed residual field $\tilde{s}$ from \eqref{eq:nonparSRF} depending on the covariate $x$, the parametric smoothed residual field $s$ from \eqref{eq:paramSRF} with log-linear model depending on $x$, the parametric smoothed residual field $s$ from \eqref{eq:paramSRF} with log-linear model depending on $x$ and $x^2$.}
    \label{fig:nonpar_residuals}
\end{figure}

\subsection{Correlation coefficient between a point process and a covariate}\label{subsec:corr_coef}

Assume now that no nuisance covariates are given ($m=0$) and we want to investigate the strength of dependence between the intensity function of $X$ and a given covariate $C_1$. Without incorporating a possible effect of $C_1$, the natural estimate of the intensity function is constant, $\hat \lambda = X(W) / |W|$, and the smoothed residual field becomes
\begin{align}\label{eq:nonparSRFconst}
    \tilde{s}(u) = \frac{1}{e(u)} \sum_{x_i \in X \cap W} k(u-x_i) - \hat\lambda.
\end{align}
%Due to the nature of $\hat\lambda$ we consider $\tilde{s}$ to be nonparametric smoothed residual field and use notation consistent with \eqref{eq:nonparSRF} to stress this.
If the covariate $C_1$ does not influence $X$, we expect $C_1$ and $\tilde{s}$ to be independent. On the other hand, if $C_1$ influences the intensity function of $X$, $\tilde{s}$ should capture the dependence structure and exhibit correlations with $C_1$. This motivates us to quantify the strength of dependence between $X$ and $C_1$ by some measure of dependence between the two random fields $C_1$ and $\tilde{s}$.

To this end, we consider Kendall's correlation coefficient \citep[p.158]{Nelsen2006} and let $U_1, U_2$ be independent random vectors with uniform distribution in $W$. Denoting $Y=C_1(U_1) - C_1(U_2)$ and $Z=\tilde{s}(U_1)-\tilde{s}(U_2)$, we define
\begin{align}\label{eq:tau_population}
    \tau = & \mathbb{P}(Y \cdot Z > 0) - \mathbb{P}(Y \cdot Z < 0).
\end{align}
%Then
%\begin{align}\label{eq:tau_population}
%    \tau = & \mathbb{P}\big[\big(C_1(U_1) - C_1(U_2)\big)\big(\tilde{s}(U_1)-\tilde{s}(U_2)\big) > 0\big] \\ & - \mathbb{P}\big[\big(C_1(U_1) - C_1(U_2)\big)\big(\tilde{s}(U_1)-\tilde{s}(U_2)\big) < 0\big].
%\end{align}
The empirical estimate of $\tau$ can be easily obtained if we consider a set of sampling points $\{ y_1, \ldots, y_n \}$, independently and uniformly distributed in $W$, independent of $X$ and $C_1$:
\begin{align}\label{eq:tauhat}
    \hat\tau = \frac{1}{n(n-1)} \sum_{i \neq j}\mathrm{sgn}(C_1(y_i) - C_1(y_j)) \; \mathrm{sgn}(\tilde{s}(y_i) - \tilde{s}(y_j)),
\end{align}
where $\mathrm{sgn}$ is the sign function. Naturally, the values of the correlation coefficient are restricted to the interval $[-1,1]$ and allow direct comparison of the strength of dependence between different data sets.

To illustrate the use of this correlation coefficient in quantifying the strength of dependence between a point process and a covariate, we perform the following experiment. We consider the Poisson process with the intensity function proportional to $\exp\{a x\}$ in the observation window $W=[0,1]^2$, for a given value of $a \in \mathbb{R}$, and with the expected number of points in $W$ fixed at 200. The covariate of interest is $C_1((x,y))=x$. The smoothed residual field $\tilde{s}$ from \eqref{eq:nonparSRFconst} is obtained with a large bandwidth $bw=0.5$ which reflects the fact that the true intensity function of the point process is very smooth. The value of $\hat\tau$ is then computed according to \eqref{eq:tauhat}. This is repeated for 500 independent realizations of the point process for each value of $a$ from a fine grid, and the means of $\hat\tau$ are plotted as a function of $a$ in the left panel of Figure~\ref{fig:kor_koef}. The plot shows that $\hat\tau$ increases in the absolute value with increasing strength of dependence, from 0 in case of independence ($a=0$) to almost 1 or -1 in case of very strong dependence. It also correctly captures the form of dependence (positive or negative association).

\begin{figure}[t]
    \centering
    \includegraphics[width=\textwidth]{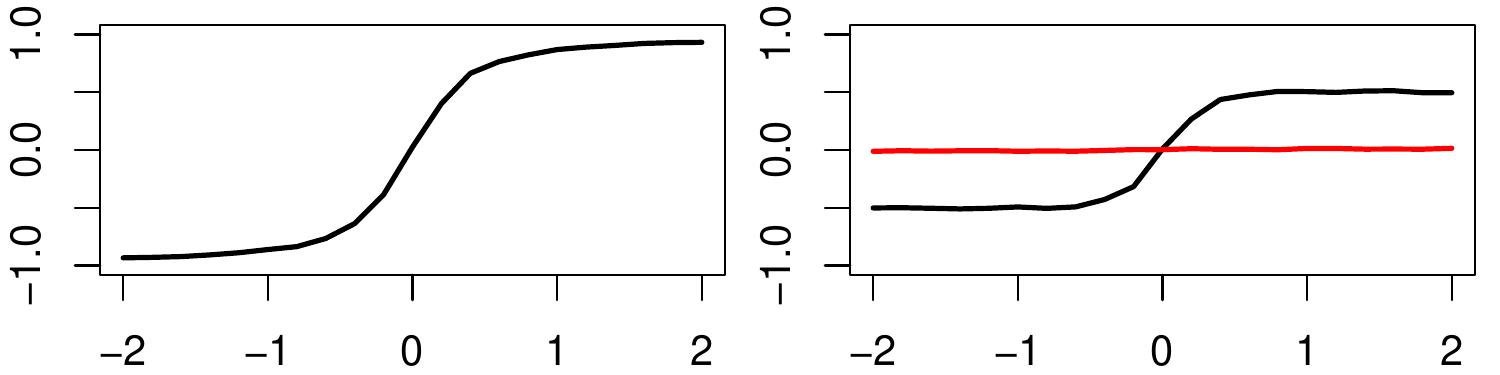}
    \caption{Left: plot of the correlation coefficient $\hat \tau$ as a function of the parameter $a$ for the example in Section~\ref{subsec:corr_coef}. Right: plot of the correlation coefficient $\hat \tau$ (black curve) and the partial correlation coefficient $\hat \tau_p$ (red curve) as functions of the parameter $a$ for the example in Section~\ref{subsec:partial_corr_coef}.}
    \label{fig:kor_koef}
\end{figure}

\subsubsection{Choice of sampling points}

We stress that independent sampling points need to be used in this case instead of simply using the observed points of $X \cap W$. In the latter case, the preferential sampling issues could arise, resulting in biased estimates of the properties of the two random fields \citep{Diggle2010a,Dvorak2022}. Loosely speaking, if, for example, the sampling points $\{ y_1, \ldots, y_n \}$ are more likely to be chosen in locations with high values of $C_1$, the sample mean and sample variance of $C_1(y_1), \ldots, C_1(y_n)$ do not reflect well the true properties of $C_1$. This negatively affects all subsequent steps of the analysis.

\subsubsection{Choice of measure of dependence}

Although different measures of dependence such as Pearson's or Spearman's correlation coefficients can be used, we suggest Kendall's correlation coefficient. It aligns well with the nonparametric spirit of this paper and has shown better performance in pre\-li\-mi\-na\-ry experiments not reported here and in previous studies on related topics \citep{Dvorak2022}.

\subsubsection{Choice of bandwidth}
For the construction of the smoothed residual field $\tilde{s}(u)$ in \eqref{eq:nonparSRFconst} one has to select a specific kernel function $k$ (a probability density function). The type of the kernel does not play an important role, and we use the Gaussian kernel. % as it is the default choice in the \texttt{spatstat} package.
On the other hand, the choice of bandwidth (standard deviation of the kernel function) affects the properties of the estimates to a great extent. Traditional rules of thumb or more involved methods %based e.g. on cross-validation techniques 
may be used for bandwidth selection in this case, see \citet[Section 6.5.1.2]{baddeley2015spatialR} or \citet{Cronie2018}. However, whenever available, expert knowledge about the specific problem at hand should guide the choice of bandwidth.

\subsection{Partial correlation coefficient between a point process and a covariate}\label{subsec:partial_corr_coef}
% Poznamky z Budejovic: vyber bw adaptivnim zpusobem

When several possibly correlated covariates are available, one might be interested in assessing the strength of dependence between the point process $X$ and the covariate of interest $C_{m+1}$ after removing the possible influence of the remaining (nuisance) covariates $C_1, \ldots, C_m$, in the spirit of the partial correlation coefficient.

The strength of dependence can be quantified by some measure of dependence between the covariate of interest $C_{m+1}$ and the smoothed residual field $\tilde{s}$ from \eqref{eq:nonparSRF} where the possible influence of the nuisance covariates $C_1, \ldots, C_m$ on $X$ has been removed. When a parametric model for the intensity function of $X$ is available, parametric residuals \eqref{eq:paramSRF} may be used instead.

We suggest using Kendall's correlation coefficient to quantify the dependence. Again, we consider a set of sampling points $\{ y_1, \ldots, y_n \}$, independently and uniformly distributed in $W$, independent of $X$ and $C_1, \ldots, C_{m+1}$, and define the sample version of the partial correlation coefficient as
\begin{align}\label{eq:tauhatpartial}
    \hat\tau_p = \frac{1}{n(n-1)} \sum_{i \neq j}\mathrm{sgn}(C_{m+1}(y_i) - C_{m+1}(y_j)) \; \mathrm{sgn}(\tilde{s}(y_i) - \tilde{s}(y_j)).
\end{align}
The population version can be defined in a similar way as in \eqref{eq:tau_population}. Concerning the choice of the sampling points and the choice of the measure of dependence, comments from the previous section apply here, too.

To illustrate the use of the partial correlation coefficient in quantifying the strength of dependence between a point process and a covariate of interest, after removing the influence of nuisance covariates, we performed the following experiment. The point process model is the Poisson process from Section~\ref{subsec:corr_coef}. Its intensity function depends in a log-linear way on the covariate $C_1((x,y))=x$, now treated as a nuisance covariate. Specifically, the intensity function is proportional to $\exp\{a x\}$. The covariate of interest is $C_2((x,y))=x+y$. We consider 500 independent realizations of the point process for each value of $a$ and compute the means of $\hat \tau$ and $\hat \tau_p$. The correlation coefficient $\hat\tau$, again computed with $bw=0.5$, correctly indicates that the point process depends on the covariate $C_2$ through the $x-$coordinate (black curve in the right panel of Figure~\ref{fig:kor_koef}). On the other hand, the partial correlation coefficient $\hat\tau_p$, computed with the adaptive choice of bandwidth described below, attains approx. zero values in this case (red curve in the right panel of Figure~\ref{fig:kor_koef}), implying that the influence of the nuisance covariate $C_1$ was successfully removed.

\subsubsection{Choice of bandwidth}\label{subsubsec:adaptive_bw}
Construction of the smoothed residual field requires choosing a bandwidth for the smoothing kernel. Again, standard recommendations may be employed, or the available expert knowledge may be utilized. However, in our pilot experiments with a single nuisance covariate $C_1$, we observed that the influence of $C_1$ was usually not completely removed from $X$ during the construction of the smoothed residual field $\tilde s(u)$, in the sense that the empirical Kendall's correlation coefficient of $\{ (\tilde{s}(y_j),C_1(y_j)), j = 1, \ldots, n \}$ was nonzero. Its value was strongly influenced by the value of bandwidth.

To remove this effect, we suggest selecting the bandwidth value (from a given finite set of candidate values) that minimizes the absolute value of the empirical Kendall's coefficient of $\{ (\tilde{s}(y_j),C_1(y_j)), j = 1, \ldots, n \}$, denoted $\hat \tau (\tilde{s},C_1)$ in the following. In this way, we select the bandwidth value that removes the influence of $C_1$ on $X$ the most successfully and it can be seen as a conservative version of the correlation coefficient. This is important mostly in cases where the nuisance covariate is correlated with the covariate of interest. For independent covariates, this procedure has very little effect on the performance of the random shift tests. We apply this approach to bandwidth selection in our simulation experiments below. When more than one nuisance covariate is available, this adaptive bandwidth procedure can be generalized by minimizing
$\sum_{i=1}^m \hat{\tau}(\tilde{s},C_i)^2$.

%the sum of squares of the empirical Kendall's coefficients of $\{ (\tilde{s}(y_j),C_i(y_j)), j = 1, \ldots, n \}$ for covariates $C_1, \ldots, C_m$. 

\subsection{Covariate-weighted residual measure}
While $\hat\tau_p$ is useful for quantifying the strength of dependence between $X$ and the covariate of interest $C_{m+1}$ after removing the influence of nuisance covariates $C_1, \ldots, C_m$, the random shift test using $\hat\tau_p$ as the test statistic turned out to have a rather low power in our simulation studies. The reason lies in the applied smoothing and the deliberate removal of the preferential sampling effects -- the association between the points of $X$ and the covariate $C_{m+1}$ brings important information.

To overcome these issues, we define the following characteristic that we call the \emph{covariate-weighted residual measure of $W$}:
\begin{align}\label{eq:CWR}
    CWR = \int_W C_{m+1}(u) \tilde{\mathcal{R}}(\mathrm{d}u) = & \sum_{x \in X \cap W} C_{m+1}(x) - \int_W C_{m+1}(u) \hat{\rho}(C_1(u), \ldots, C_m(u)) \, \mathrm{d}u.
\end{align}
This can be viewed as a generalization of the test statistic $T$ from Section~\ref{subsec:PCtest} which also includes the sum of covariate values, but does not take into account possible nuisance covariates. By sampling the values of $C_{m+1}$ at the points of $X$ we take advantage of any possible preferential sampling effects, and no smoothing is performed when computing the value of $CWR$, hence we avoid the problem of bandwidth selection. The expectation of $CWR$ is close to 0 if the covariates $C_1, \ldots, C_m$ capture all variation in $\lambda(u)$, i.e. if $\hat \rho$ is close to $\lambda$, and will differ from 0 otherwise. This enables testing the significance of $C_{m+1}$ after removing the influence of $C_1, \ldots, C_m$.

%Assuming for a moment that the covariate $C_{m+1}$ is deterministic, the expectation of the sum in \eqref{eq:CWR} is $\int_W C_{m+1}(u) \lambda(u) \, \mathrm{d}u$, where $\lambda(u)$ is the true intensity function of $X$, and $CWR$ is expected to be close to 0 if the covariates $C_1, \ldots, C_m$ capture all variation in $\lambda(u)$, \textcolor{blue}{i.e. if $\hat \rho$ is close to $\lambda$}. \textcolor{blue}{Je tahle argumentace korektní a srozumitelná?} \textcolor{green}{Musel jsem to cist dvakrat, ale minim, ze je to srozumitelne.} \textcolor{blue}{Doplnil jsem krátké vysvětlení.}

\subsection{Testing the covariate significance under the presence of nuisance covariates}
Now we focus on the null hypothesis that $X$ and $C_{m+1}$ are independent, conditionally on $C_1, \ldots, C_m$. We employ the random shift test described in Section~\ref{sec:RStests}, either with torus or variance correction. The test statistic can be $\hat\tau_p$ in which case the two spatial objects to be shifted against each other are the two random fields $\Phi = \tilde{s}$ and $\Psi = C_{m+1}$. Alternatively, one can use the covariate-weighted residual measure of $W$ as the test statistic. In this case $\Phi = \tilde{R}$ is a measure and $\Psi = C_{m+1}$ is a random field. If $v_i$ is a shift vector, the shift of the random field $\Psi$ should be interpreted in both cases as $(\Psi + v_i)(u) = \Psi(u - v_i)$. The choice of the correction factors for the variance correction is discussed in Appendix~\ref{appendix:correction_factors}, including Proposition~1 which studies the variance of $CWR$ for Poisson processes and an empirical study for log-Gaussian Cox processes. The assumption of stationarity of one of the spatial objects is discussed in detail in Section~\ref{sec:CD}.

\section{Simulation study}\label{sec:simulations}
%- pouze raw rezidua

%- pouze defaultni verze Adrianova odhadu $\rho$

%- torus - shift vektory jen z kruhu, komentovat ctverec

%- stacionarni modely vynechat

To assess the performance of the proposed tests, we present below a set of simulation experiments, both under the null hypothesis and under various alternatives. The models range from clustering through complete spatial randomness to regularity, even combining clustering and inhibition on different scales. The null hypothesis states that $X$ and $C_{m+1}$ are independent, conditionally on the nuisance covariates. For simplicity, we focus on the situation with a single nuisance covariate. The proposed nonparametric tests are compared with the parametric methods available in standard software represented by the \texttt{spatstat} package.

\subsection{Simulation study design}\label{subsec:simulation_design}

The following notation and choices are used in all simulation experiments. $Z_1, Z_2, \ldots$ are independent, identically distributed Gaussian random fields, centered, unit variance, with exponential covariance function with scale 0.1. The observation window is $W=[0,1]^2$. The expected number of points in $W$ is equal to $\exp\{ 5 \} \doteq 148.4$ for Poisson and clustered models and approximately equal to $\exp\{ 5 \}$ for models exhibiting regularity.

For each model, we simulate $5\,000$ independent realizations, and for each realization, we perform a set of tests on the 5\% nominal significance level. In the tables of results we report the fractions of rejections for the individual tests, rounded to three decimals. To assess the liberality or conservativeness of the tests, one can compare the reported rejection rates (in experiments performed under the null hypothesis) with the interval based on the 2.5~\% and 97.5~\% quantiles of the binomial distribution with parameters $n=5\,000$ and $p=0.05$, that is, with the interval $[0.0440, 0.0562]$.

We investigate the performance of the random shift tests with either $\hat\tau_p$ or $CWR$ as the test statistic, with either parametric or nonparametric version of residuals (denoted by the symbol ``p'' or ``n'' in the tables of results) and with either torus or variance correction (denoted by ``tor'' or ``var'' in the tables of results). The values of $\hat\tau_p$ are computed with the bandwidth selected by the adaptive procedure from Section~\ref{subsubsec:adaptive_bw} and with 100 sampling points chosen uniformly and independently in $W$. The random shift tests are compared with the parametric tests provided by the functions \texttt{ppm} (for Poisson, Strauss, and hardcore Strauss processes) and \texttt{kppm} (for log-Gaussian Cox processes, denoted LGCP in the following) from the \texttt{spatstat} package, see Section~\ref{subsec:parametric_methods}.

To mimic the practical issues with model specification, we consider these parametric tests both with the correct interaction model and with an incorrect interaction model of a similar type. Specifically, in addition to fitting the correct model to the LGCPs we also fit a Matérn cluster process; for the Strauss and hardcore Strauss processes we fit the models with the interaction distance fixed to either the correct or incorrect value, specified in the tables of results in the column ``Variant''. We also fit an inhomogeneous Poisson process to all data sets to investigate the effect of ignoring the interaction structure. On the other hand, we do not try fitting clustered models to clearly regular data sets and vice versa.%, as such type of model misspecification would rarely occur in practice.

All the parametric tests assume the log-linear model for the intensity function \eqref{eq:lambda_loglinear}, even though for some point process models we consider below this does not hold. This also illustrates possible issues with model misspecification in practice.

\subsection{Significance level under independent covariates}\label{subsec:sim_size_independent}
% P2, P3, L2, L3, S2, S3, H2, H3 (hlavni kritika kppm -- i pri spravnem modelu je test prilis liberalni, asymptotika jeste neni dost daleko)

%\textcolor{blue}{Přeznačení modelů třeba na P2 = $P_1$, L2p = $L_1^*$, H3p-P = $H_2^P$ ???}

In the following, we let the nuisance covariate influencing the intensity function of the point process be $C_1(u) = Z_1(u)$ and the covariate of interest be $C_2(u) = Z_3(u)$, which means that the covariate of interest $C_2$ is independent of the nuisance covariate $C_1$ and the point process $X$. For the construction of the LGCP models we also use the random field $Z_2$, which is responsible for interactions in the point process rather than variation in its intensity function. For the Poisson and LGCP models, the covariate $C_1$ influences the intensity function directly. For the Strauss and hardcore Strauss models, it directly influences the trend function $\beta(u)$. This influence is transformed to the intensity function in a nontrivial way. We consider the following models:
\begin{itemize}
    \item[($P_1$)] Poisson process with intensity function $\lambda(u) = \exp \{ 4.5 + Z_1(u) \}$ 
    \item[($P_2$)] Poisson process with intensity function $\lambda(u) = \exp \{ 5 \} \cdot Z_1(u)^2$ 
    \item[($L_1$)] LGCP with driving intensity function $\Lambda(u) = \exp \{ 4.0 + Z_1(u) + Z_2(u) \}$ 
    \item[($L_2$)] LGCP with driving intensity function $\Lambda(u) = \exp \{ 4.5 + Z_2(u) \} \cdot Z_1(u)^2$
    \item[($S_1$)] Strauss process with interaction parameter $\gamma = 0.5$, interaction range $R = 0.05$ and trend $\beta(u) = 220 \cdot \exp\{Z_1(u)\}$
    \item[($S_2$)] Strauss process with $\gamma = 0.5, R = 0.05$ and $\beta(u) = 350 \cdot Z_1(u)^2$
    \item[($H_1$)] Strauss process with hardcore distance $hc=0.01$, interaction parameter $\gamma = 4$, interaction distance $R=0.02$ and trend $\beta(u) = 180 \cdot \tilde{Z}(u)$, where $\tilde{Z}(u) = c \cdot \exp\{ Z_1(u)/5 \}$ and $c$ is chosen for each realization so that the maximum of the given realization of $\tilde{Z}(u)$ over $W$ is 1.
    \item[($H_2$)] Strauss process with hardcore distance $hc=0.01$, $\gamma = 4$, $R=0.02$ and $\beta(u) = 120 \cdot \tilde{Z}(u)$, where $\tilde{Z}(u) = c \cdot \max(1-Z_1(u)^2/5,0)$, again scaled for each realization to attain the maximum value of 1 over $W$.
\end{itemize}
Note that in the shorthand notation for the models the letter represents the type of interaction in the point process while the subscript specifies whether the covariate $C_1$ influences the intensity function in a log-linear way (denoted by 1) or in a quadratic way (denoted by 2).

Since the covariate of interest $C_2$ is independent of $X$, the tests should reject in 5~\% of cases. Table~\ref{tab:simulace_size_independent} shows the fractions of rejection. We make the following observations:
\begin{itemize}
    \item The nonparametric tests match the nominal significance level correctly for all models, the tests based on $CWR$ match it slightly more precisely than those based on $\hat\tau_p$. Both the torus correction and the variance correction perform well, with only a slight tendency toward liberality observed for the torus correction and the tests based on $\hat\tau_p$.
    \item Parametric tests assuming correct interaction structure and correct model for the intensity function (denoted by 1) match the nominal significance level correctly for the Poisson process (P) while being highly liberal for the LGCP (L) and hardcore Strauss process (H). They are slightly conservative for the Strauss process (S). 
    \item Parametric tests assuming correct interaction structure and incorrect model for the intensity function (denoted by 2) may exhibit very strong liberality (P, H) or conservativeness (L).
    \item Parametric tests assuming incorrect interaction structure may exhibit very strong liberality (e.g. assuming Poisson interactions for L or H models) or conservativeness (e.g. assuming attractive interactions for P models).
\end{itemize}
These observations illustrate that parametric tests are prone to perform poorly under model misspecification either in terms of the interaction structure or the intensity function. However, even when both of these model components are specified correctly, there is a risk of strong liberality of the parametric tests with the sample sizes considered here. From this point of view, the nonparametric tests are preferable as they match the nominal significance level correctly for all models in this study.

\begin{table}
\caption{Size of the tests, independent covariates -- fractions of rejection. For the H models the asterisk signifies that the correct hardcore distance was assumed in the given parametric test, whereas for the P and S models no hardcore distance is assumed.% (to reduce the size of the table the results were included in the same rows).
\label{tab:simulace_size_independent}}
\renewcommand{\arraystretch}{1.2}
\begin{tabular}{|l|l||c|c||c|c||c|c||c|c|}
 \hline
 Test   & Variant   & $P_1$    & $P_2$    & $L_1$    & $L_2$    & $S_1$ & $S_2$ & $H_1$ & $H_2$  \\ \hline\hline
 $\hat\tau_p$ & p, tor    & 0.053 & 0.058 & 0.059 & 0.065 & 0.056 & 0.062 & 0.063 & 0.061 \\ \hline
 $\hat\tau_p$ & p, var    & 0.040 & 0.043 & 0.031 & 0.047 & 0.043 & 0.049 & 0.047 & 0.045  \\ \hline
 $\hat\tau_p$ & n, tor    & 0.049 & 0.057 & 0.059 & 0.063 & 0.058 & 0.060 & 0.061 & 0.062  \\ \hline
 $\hat\tau_p$ & n, var    & 0.043 & 0.046 & 0.035 & 0.047 & 0.046 & 0.047 & 0.047 & 0.050  \\ \hline\hline
 CWR          & p, tor    & 0.049 & 0.053 & 0.055 & 0.051 & 0.048 & 0.051 & 0.057 & 0.048  \\ \hline
 CWR          & p, var    & 0.050 & 0.049 & 0.050 & 0.054 & 0.047 & 0.050 & 0.056 & 0.048  \\ \hline
 CWR          & n, tor    & 0.049 & 0.050 & 0.054 & 0.050 & 0.045 & 0.045 & 0.056 & 0.051  \\ \hline
 CWR          & n, var    & 0.048 & 0.051 & 0.049 & 0.048 & 0.048 & 0.045 & 0.055 & 0.049  \\ \hline\hline
 ppm          & Poisson   & 0.048 & 0.166 & 0.268 & 0.331 & 0.021 & 0.052 & 0.198 & 0.161  \\ \hline\hline
 kppm         & LGCP      & 0.020 & 0.023 & 0.080 & 0.027 & --    & --    & --    & --  \\ \hline
 kppm         & MC        & 0.041 & 0.025 & 0.086 & 0.030 & --    & --    & --    & --  \\ \hline\hline
 ppm          & Str(0.02) & 0.039 & 0.111 & --    & --    & 0.019 & 0.040 & 0.134* & 0.137*  \\ \hline
 ppm          & Str(0.05) & 0.044 & 0.070 & --    & --    & 0.038 & 0.060 & 0.095* & 0.079*  \\ \hline
 ppm          & Str(0.10) & 0.041 & 0.063 & --    & --    & 0.028 & 0.034 & 0.093* & 0.092*  \\ \hline
\end{tabular}
\end{table}

\subsection{Significance level under dependent covariates}\label{subsec:sim_size_correlated}
% $L_1^*$ - ostatni davaji stejna pozorovani

In this section, we consider the case of the covariate of interest $C_2$ being correlated with the nuisance covariate $C_1$. We denote $C_1(u) = Z_1(u)$ and $C_2(u) = Z_1(u) + b Z_3(u)$ for different values of $b > 0$. A smaller value of $b$ implies a stronger correlation between the covariates. We consider the same point process models as in Section~\ref{subsec:sim_size_independent}. In fact, we use the same realizations and simply construct the covariate $C_1$ in a different way.

We have performed the simulation experiments for the $L_1$, $L_2$, $S_1$ and $S_2$ models. However, we report the results only for the $L_1$ model (denoted $L_1^*$ below to indicate that the covariates are correlated) since the observations made in all these cases are the same.

Table~\ref{tab:simulace_size_dependent} shows the fractions of rejection for the $L_1^*$ model with different choices of $b$ together with the results for the original $L_1$ model from Section~\ref{subsec:sim_size_independent} which can be considered as the limiting case for $b \rightarrow \infty$. The parametric tests exhibit the same rejection rates no matter the value of $b$ due to the specific choice of the (linear) form of the covariates and the log-linear form of the intensity function -- for all values of $b$ the parametric tests in fact fit the same model by putting different weights on $C_1$ and $C_2$.

The nonparametric tests show an increasing level of conservativeness with the increasing correlation between the covariates (with $b$ going to 0). For $CWR$ this is caused by the nature of the random shift test and the preferential sampling effects which reduce the variance of the test statistic computed from the observed data (with no shift) compared to the test statistic values computed from shifted data (where the preferential sampling effects are reduced or removed completely), as confirmed by simulation experiments not reported here. Similar conservativeness also appears for $\hat\tau_p$. %\textcolor{blue}{také na simulacích vidíme menší rozptyl pro testovou statistiku bez posunu při závislých kovariátách než pro testovou statistiku bez posunu při nezávislých kovariátách -- proč?}
We consider this conservativeness to be a smaller issue than liberality and conclude that this observation does not provide arguments against the use of the nonparametric tests.

\begin{table}
\caption{Size of the tests, correlated covariates -- fractions of rejection for the $L_1^*$ model with different values of $b$.\label{tab:simulace_size_dependent}}
\renewcommand{\arraystretch}{1.2}
\begin{tabular}{|l|l||c|c|c|c|c||c|}
 \hline
 Test   & Variant  & $L_1^*$   & $L_1^*$   & $L_1^*$   & $L_1$ \\ \hline\hline
 $b$    &          & 1     & 2     & 4     & --  \\ \hline\hline
 $\hat\tau_p$ & p, tor   & 0.030 & 0.044 & 0.053 & 0.059   \\ \hline
 $\hat\tau_p$ & p, var   & 0.021 & 0.029 & 0.032 & 0.031   \\ \hline
 $\hat\tau_p$ & n, tor   & 0.028 & 0.042 & 0.051 & 0.059   \\ \hline
 $\hat\tau_p$ & n, var   & 0.018 & 0.026 & 0.031 & 0.035   \\ \hline\hline
 CWR          & p, tor   & 0.013 & 0.035 & 0.045 & 0.055  \\ \hline
 CWR          & p, var   & 0.014 & 0.033 & 0.042 & 0.050  \\ \hline
 CWR          & n, tor   & 0.017 & 0.039 & 0.047 & 0.054  \\ \hline
 CWR          & n, var   & 0.014 & 0.034 & 0.044 & 0.049  \\ \hline\hline
 ppm          & Poisson  & 0.268 & 0.268 & 0.268 & 0.268   \\ \hline\hline
 kppm         & LGCP     & 0.080 & 0.080 & 0.080 & 0.080   \\ \hline
 kppm         & MC       & 0.086 & 0.086 & 0.086 & 0.086   \\ \hline
\end{tabular}
\end{table}

\subsection{Power under dependent covariates}\label{subsec:sim_power_correlated}

In this section, we study the power of the tests in the situations where the covariate of interest $C_2$ influences the intensity function of $X$ even after removing the effect of the nuisance covariate $C_1$. We consider models similar to those in the previous sections and let $C_1(u) = Z_1(u)$ and $C_2(u) = Z_1(u) + 2 Z_3(u)$. We focus on the case with dependent covariates, which is more challenging for our proposed tests as they showed conservativeness in Section~\ref{subsec:sim_size_correlated}. The models depend on a parameter $a > 0$ that controls the strength of dependence between $X$ and the covariate of interest $C_2$. The value of $a$ is chosen so that all the tests exhibit nontrivial powers, i.e. not close to 0.05 and not close to 1.00. The models are given as follows:
\begin{itemize}
    \item[($P_1^p$)] Poisson process with intensity function $\lambda(u) = \exp \{ 4.5 + Z_1(u) + aZ_3(u) - a^2/2 \}$ with $a=1/4$.
    \item[($P_2^p$)] Poisson process with intensity function $\lambda(u) = \exp \{ 5.0 + aZ_3(u) - a^2/2 \} \cdot Z_1(u)^2$ with $a=1/4$.
    \item[($L_1^p$)] LGCP with driving intensity function $\Lambda(u) = \exp \{ 4.0 + Z_1(u) + Z_2(u) + aZ_3(u) - a^2/2 \}$ with $a=1/2$.
    \item[($L_2^p$)] LGCP with driving intensity function $\Lambda(u) = \exp \{ 4.5 + Z_2(u) + aZ_3(u) - a^2/2 \} \cdot Z_1(u)^2$ with $a=1/2$.
    \item[($S_1^p$)] Strauss process with interaction parameter $\gamma = 0.5$, interaction range $R = 0.05$ and trend $\beta(u) = 210 \cdot \exp\{Z_1(u) + a Z_3(u)\}$ with $a=1/4$.
    \item[($S_2^p$)] Strauss process with $\gamma = 0.5, R = 0.05$ and $\beta(u) = 350 \cdot \exp\{a Z_3(u)\} \cdot Z_1(u)^2$ with $a=1/4$.
    \item[($H_1^p$)] Strauss process with hardcore distance $hc=0.01$, interaction parameter $\gamma = 4$, interaction distance $R=0.02$ and trend $\beta(u) = 190 \cdot \tilde{Z}(u)$, where $\tilde{Z}(u) = c \cdot \exp\{ Z_1(u)/5 + aZ_3(u) - a^2/2\}$ and $c$ is chosen for each realization so that the maximum of the given realization of $\tilde{Z}(u)$ over $W$ is 1.
    \item[($H_2^p$)] Strauss process with hardcore distance $hc=0.01$, $\gamma = 4$, $R=0.02$ and $\beta(u) = 170 \cdot \tilde{Z}(u)$, where $\tilde{Z}(u) = c \cdot \exp\{ aZ_3(u) - a^2/2\} \cdot \max(1-Z_1(u)^2/5,0)$, again scaled for each realization to attain the maximum value of 1 over $W$.
\end{itemize}

Table~\ref{tab:simulace_power} shows the fractions of rejections for the individual tests for the eight models specified above. We make the following observations:
\begin{itemize}
    \item For both $\hat\tau_p$ and $CWR$, the versions of the test based on nonparametric residuals exhibit higher power than those based on parametric residuals.
    \item The tests based on $\hat\tau_p$ have very low power due to the smoothing and removal of the preferential sampling effects.
    \item The tests based on $CWR$ exhibit very high power comparable to the parametric tests with correct interaction model and correct model for the intensity function (for P, L, and H models) or even higher power (S).
    \item When the parametric tests are used with the correct interaction model and incorrect model for the intensity function, the nonparametric tests based on $CWR$ have much higher power (L, S), slightly higher power (H), or direct comparison is not possible due to severe liberality of the parametric test (P).
    \item The torus correction and the variance correction perform nearly equivalently for tests based on $CWR$, while for tests based on $\hat\tau_p$ the torus correction shows slightly higher power, which can be explained by the small liberality of these tests observed in Section~\ref{subsec:sim_size_independent}.
\end{itemize}
These observations indicate that the random shift tests based on $CWR$ with nonparametric residuals and either torus or variance correction can be preferred in practice to parametric tests since the possible issues with model misspecification are avoided without compromising the power of the test.

\begin{table}
\caption{Power of the tests, correlated covariates -- fractions of rejection. For the H models the asterisk signifies that the correct hardcore distance was assumed in the given parametric test, whereas for the P and S models no hardcore distance is assumed.\label{tab:simulace_power}}
\renewcommand{\arraystretch}{1.2}
\begin{tabular}{|l|l||c|c||c|c||c|c||c|c|}
 \hline
 Test   & Variant   & $P_1^p$ & $P_2^p$ & $L_1^p$ & $L_2^p$ & $S_1^p$ & $S_2^p$ & $H_1^p$   & $H_2^p$ \\ \hline\hline
% $a$    &           & 1/4   & 1/4   & 1/2   & 1/2   & 1/4   & 1/4   & 1/8    & 1/8 \\ \hline\hline
 $\hat\tau_p$ & p, tor    & 0.198 & 0.126 & 0.153 & 0.155 & 0.127 & 0.114 & 0.101 & 0.128 \\ \hline
 $\hat\tau_p$ & p, var    & 0.160 & 0.105 & 0.113 & 0.121 & 0.100 & 0.096 & 0.076 & 0.100 \\ \hline
 $\hat\tau_p$ & n, tor    & 0.215 & 0.214 & 0.164 & 0.199 & 0.137 & 0.152 & 0.104 & 0.134 \\ \hline
 $\hat\tau_p$ & n, var    & 0.177 & 0.178 & 0.125 & 0.164 & 0.111 & 0.122 & 0.080 & 0.109 \\ \hline\hline
 CWR          & p, tor    & 0.758 & 0.444 & 0.820 & 0.702 & 0.608 & 0.396 & 0.310 & 0.405 \\ \hline
 CWR          & p, var    & 0.753 & 0.441 & 0.808 & 0.698 & 0.604 & 0.392 & 0.304 & 0.402 \\ \hline
 CWR          & n, tor    & 0.793 & 0.677 & 0.846 & 0.791 & 0.690 & 0.514 & 0.331 & 0.418 \\ \hline
 CWR          & n, var    & 0.789 & 0.675 & 0.835 & 0.783 & 0.688 & 0.511 & 0.324 & 0.416 \\ \hline\hline
 ppm          & Poisson   & 0.774 & 0.720 & 0.960 & 0.943 & 0.507 & 0.483 & 0.618 & 0.714 \\ \hline\hline
 kppm         & LGCP      & 0.644 & 0.254 & 0.841 & 0.477 & --    & --    & --     & -- \\ \hline
 kppm         & MC        & 0.745 & 0.264 & 0.848 & 0.499 & --    & --    & --     & -- \\ \hline\hline
 ppm          & Str(0.02) & 0.682 & 0.579 & --    & --    & 0.432 & 0.381 & 0.346* & 0.375* \\ \hline
 ppm          & Str(0.05) & 0.614 & 0.456 & --    & --    & 0.467 & 0.399 & 0.302* & 0.340* \\ \hline
 ppm          & Str(0.10) & 0.515 & 0.387 & --    & --    & 0.344 & 0.263 & 0.307* & 0.355* \\ \hline
\end{tabular}
\end{table}

\subsection{Results of further simulation experiments}
% Ruzne typy residui, odhadu rho, shifty pro torus, bw pro tau

In the following, we comment on some observations made from further simulation experiments not reported here. First, the scaled versions of the residuals discussed in \citet{BaddeleyEtal2005} can be used instead of the raw residuals \eqref{eq:paramR} and \eqref{eq:nonparR}. We have investigated the performance of the nonparametric tests based on the inverse and Pearson residuals and compared it to the performance of the tests based on the raw residuals. In terms of rejection rates under the null hypothesis, we have found no significant differences between the three types of residuals. Concerning the power of the tests, the raw and Pearson residuals performed equally well, while the inverse residuals exhibited somewhat smaller power.% Hence, for clarity of exposition we present above detailed simulation results only for the raw residuals.

Second, the nonparametric estimation of the intensity function depending on a covariate can be performed by the \texttt{spatstat} function \texttt{rhohat} using three types of estimators: ``ratio'', ``reweight'' and  ``transform'' \citep{Baddeley2012}. In Sections \ref{subsec:sim_size_independent} to \ref{subsec:sim_power_correlated} we reported the results for the default ratio estimator. The rejection rates for the ratio and the reweight estimators were comparable, while being somewhat higher for the transform estimator, showing slight liberality under the null hypothesis accompanied by slightly higher power under the alternatives.

Finally, the random shift tests were performed in the previous sections with shift vectors generated from the uniform distribution on a disc both for the torus correction and the variance correction to enable direct comparison. When the observation window $W$ is rectangular and the torus correction is used, it might be more natural to consider the shift vectors generated from the uniform distribution on the whole $W$. In a smaller simulation experiment, the two versions of the random shift test with torus correction performed similarly, with a small tendency towards liberality for the version with shift vectors generated uniformly on $W$.

\section{Nonparametric covariate selection for the BCI data set}\label{sec:application_BCI}

To illustrate the possibility to use the proposed random shift tests for covariate selection, we consider now the BCI data set described in Section~\ref{subsec:motivation}. Five covariates are available that possibly influence the intensity function of the point process. A possible way to select the set of covariates that have a significant effect on the intensity function is the backward selection procedure described in the following. The numerical results are given in Table~\ref{tab:results_BCI}.

We start in stage 1 with all five covariates, and for each of those we perform the random shift test where the given covariate is the covariate of interest and the remaining four covariates are considered to be the nuisance covariates. We use the test based on $CWR$ with nonparametric residuals and torus correction, with 999 random shifts where the shift vectors have uniform distribution on a disc with radius 250 metres. The covariate with the highest p-value (potassium in this case, printed in italics in Table~\ref{tab:results_BCI}) is removed, and the procedure is repeated in stage 2 with the four covariates. In this stage, the nitrogen covariate is removed, then the gradient covariate, and finally in stage 4 where only two covariates are considered (elevation, phosphorus), both covariates are found significant on the 5\% significance level, see Table~\ref{tab:results_BCI}. We conclude that these two covariates significantly affect the intensity function of the point process and should be included in the further steps of the inference. Other covariates can be disregarded without losing important information.

For comparison, we have also fitted the log-linear model \eqref{eq:lambda_loglinear} with the five covariates considered here, using the \texttt{kppm} function from the \texttt{spatstat} package as described in Section~\ref{subsec:parametric_methods}. We assume the Thomas type of interactions as suggested in \citet[Sec.~12.4.4]{baddeley2015spatialR}. With this approach, three covariates are found significant on the 5\% significance level: elevation, gradient and phosphorus, with p-values 0.019, 0.044 and $10^{-4}$, respectively. Two of these covariates were also found significant by the nonparametric procedure described above (elevation and phosphorus, see Table~\ref{tab:results_BCI}). On the other hand, the gradient covariate was found borderline significant by the parametric approach and not significant by the nonparametric procedure.

%\textcolor{blue}{JD: ještě se dá komentovat, že ve spatstat book, sekce 12.4.4 používají jen dvě kovariáty, elevation a gradient, a pro thomasovské interakce jim vyjde elevation nevýznamné a gradient významný, tedy naopak, než v parametrické analýze v předchozím odstavci. To je způsobeno vlivem ostatních kovariát, zejména fosforu, který je silně záporně korelován s gradientem (pearsonova korelace kolem -0.45, jak jsme to viděli v pátek). Osobně bych o tom ale raději nemluvil.} \tomas{TAKE BYCH NEKOMENTOVAL}

\begin{table}
\caption{Backward selection of covariates for the BCI data set. Individual cells show the p-values of the random shift tests. The row indicates the covariate of interest, while all other covariates considered in the given stage (column) are considered to be the nuisance covariates.\label{tab:results_BCI}}
    \renewcommand{\arraystretch}{1.2}
	\centering
		\begin{tabular}{|l||c|c|c|c|}
		\hline
			           & stage 1 & stage 2 & stage 3 & stage 4  \\ \hline\hline
			elevation  & 0.168   & 0.186   & 0.018   & 0.016   \\ \hline
			gradient   & 0.228   & 0.240   & \emph{0.232}   & --      \\ \hline
			nitrogen   & 0.426   & 0.\emph{388}   & --      & --      \\ \hline
			phosphorus & 0.066   & 0.128   & 0.220   & 0.048   \\ \hline
			potassium  & \emph{0.598}   & --      & --      & --      \\ \hline
		\end{tabular}
\end{table}

\section{Nonparametric comparison of dependence strength for the CLM data set} \label{sec:application_CLM}

We now focus on the CLM data set described in Section~\ref{subsec:motivation}. To assess the strength of dependence of the forest fire locations on the two available covariates we estimate the correlation coefficient $\tau$ using \eqref{eq:tauhat}. The bandwidth is chosen as the default value from the \texttt{spatstat} function \texttt{density.ppp} which is 50 kilometers in this case. For the elevation covariate the estimated value is 0.035, while for the gradient covariate it is 0.103. The positive signs of the estimated values indicate that the intensity of point occurrence tends to be higher in locations with high covariate values. However, the influence of elevation seems to be negligible and the influence of gradient appears to be very small. When looking at the partial correlation coefficients $\hat \tau_p$ from \eqref{eq:tauhatpartial}, removing the influence of the other covariate, we obtain the value 0.031 for the elevation covariate and 0.103 for the gradient covariate, respectively.

We may also estimate the correlation coefficients for the BCI data set and compare the strength of dependence between the point process and the covariates (elevation, gradient) between the two data sets (BCI vs. CLM). For the BCI data set we choose the bandwidth of 62.5 kilometers in the same way as above. For the elevation covariate the estimated value of $\tau$ is -0.048, for the gradient covariate it is 0.249. When removing the influence of all the remaining covariates, including the soil mineral contents, the estimated value of $\tau_p$ for the elevation covariate is 0.083, for the gradient covariate it is 0.172. We conclude that the influence of the gradient on the point process, after removing the influence of the remaining available covariates, is nearly twice as strong in the BCI data set than in the CLM data set. The influence of elevation, as quantified by the correlation and partial correlation coefficients, is much weaker than the influence of gradient in both data sets.

We remark that in Section~\ref{sec:application_BCI}, the elevation covariate has been determined to have a stronger influence on the point process in the BCI dataset than the gradient covariate while the opposite observation has been made in this section. This can be attributed to the strong dependence between gradient and elevation and the conceptually different methods applied: the $CWR$ test statistic uses the covariate values directly while the correlation coefficients only use the signs of the differences of covariate values. Also, smoothing is required for computation of the correlation coefficients while it is avoided for $CWR$.

\section{Conclusions and discussion}\label{sec:CD}
The methods proposed in this paper allow quantification and testing of the significance of the correlations between a point process and a covariate of interest, possibly after removing the influence of nuisance covariates. We stress that the proposed methods can be used without specifying any model for the data. The simulation experiments reported in Section~\ref{sec:simulations} show that the random shift tests based on $\hat\tau_p$ or $CWR$ match the nominal significance level correctly even in situations where parametric tests based on asymptotic distributions (assuming the correct form of interactions in the point process and the correct form of the intensity function) exhibit different degrees of liberality or conservativeness. Under model misspecification, the parametric tests may suffer from even more severe problems. Concerning power, the nonparametric tests based on $CWR$ exhibit comparable or even higher power than parametric tests under the correct model, while showing higher power than parametric tests under incorrect models (where either the interaction or the intensity function is misspecified). This indicates the superiority of the $CWR$ tests over parametric tests in practical applications where the true model is not known. Hence, using the proposed nonparametric $CWR$ tests for covariate selection, e.g. as discussed in Section~\ref{sec:application_BCI}, provides more reliable results than the available parametric tests are able to provide, and the selected covariates can be used in the further steps of inference with greater confidence.

The only assumption of the proposed random shift tests is that at least one of the objects is stationary under the null hypothesis, so that its distribution is not affected by the shifts. Either the covariate of interest can be assumed to be stationary or the covariate-weighted residual measure or the smoothed residual field can be assumed to be close to stationarity if all the relevant covariates are used in construction of the residuals.

A natural question is whether the proposed methods are applicable also to categorical covariates. If one of the nuisance covariates is categorical, nonparametric estimation of the intensity function may be performed separately on the individual subregions of $W$ determined by the categorical covariate, allowing all the proposed methods to be used as described above. If the categorical covariate is the covariate of interest, computing $\hat\tau$ or $\hat\tau_p$ is not relevant due to the ties in the data. 
%Computing the value of $CWR$ and performing the corresponding random shift test may only be appropriate if the values of the covariate of interest are on an interval scale, i.e. when the covariate values allow ordering and their differences are meaningful.
However, the observation window $W$ can be separated into subregions $W_1, \ldots, W_k$ determined by the values of the covariate of interest. The values $V_i = n(X \cap W_i) - \int_{W_i} \hat \rho(C_1(u), \ldots, C_m(u)) \, \mathrm{d} u, \; i = 1, \ldots, k,$ can be used to form a vector test statistic $(V_1, \ldots, V_k)$ and the random shift test can be performed e.g. by means of the global envelope test \citep{MyllymakiEtal2017}. This approach corresponds to the determination of differences between point process intensities in subregions $W_1, \ldots, W_k$.

\section*{Acknowledgements}
We are grateful to Jim Dalling, Robert John, Kyle Harms, Robert Stallard, and Joe Yavitt for collecting the BCI soil data (supported by STRI Soils Initiative and CTFS and the grants NSF DEB021104, 021115, 0212284, 0212818, OISE 0314581, with field assistance by Paolo Segre and Juan Di Trani).

\bibliography{Tomas_Bibfile}
\bibliographystyle{rss}

\appendix

\section{Correction factors for random shifts with variance correction}\label{appendix:correction_factors}
% Poznamky z Budejovic: grafy rozptylu prenasobeneho korekcnim faktorem, jen exponencialni model autokorelacni funkce

For using the variance correction for random shift tests described in Section~\ref{subsubsec:variance_correction} the variance of the test statistic under the null hypothesis must be given as a function of the size of the observation window, at least asymptotically.

\subsection{Partial correlation coefficient $\hat \tau_p$}\label{app:A1}

Concerning the partial correlation coefficient $\hat\tau_p$, assume now that the set of sampling points $\{y_1, \ldots, y_n\}$ is a realization of a stationary point process with intensity $\lambda_0$ in $W$. Then, the variance of $\hat\tau_p$ is of order $1/(\lambda_0 |W|)$ under reasonable assumptions, see Theorem~2 of \citet{Dvorak2022}. The true intensity $\lambda_0$ of the process of sampling points is estimated as $n/|W|$, implying that a reasonable estimate of the order of variance of $\hat\tau_p$ is $1/n$. This aligns with the known result that the asymptotic order of variance of the sample Kendall's correlation coefficient computed from an i.i.d. sample of size $n$ is $1/n$ \citep[pp. 164--165]{van1998asymptotic}.

We illustrate in a short simulation experiment the practical applicability of the variance correction approach with this correction factor. Based on the arguments above, the function $f(|W|) = |W| \mathrm{var} (\hat\tau_p)$ should be approximately constant, at least for higher values of $|W|$. We consider a sequence of square observation windows $W_a = [0,a]^2$ with $a=0.5, 1, 1.5, \ldots, 4$. Let $Z_1, Z_2, Z_3$ be centered, unit variance Gaussian random fields with the exponential correlation function with the scaling parameter $\phi = 0.05, 0.10, 0.15$. Let $X$ be the log-Gaussian Cox process driven by the random intensity function $\Lambda(u) = \exp \{ 4.0 + Z_1(u) + Z_2(u) \}, u \in W_a,$ for a given value~$a$. Let the observed covariates be $C_1(u) = Z_1(u), C_2(u) = Z_1(u) + Z_3(u)$. The covariate $C_1$ describes the inhomogeneity in the point process $X$ while the unobserved random field $Z_2$ governs the interactions in $X$. Note that this model corresponds to the model $L_1^*$ considered in Section~\ref{subsec:sim_size_correlated} with the choice $b=1$.

If the influence of covariate $C_1$ is successfully removed from $X$, the residual field used to compute $\hat\tau_p$ is independent of covariate $C_2$. Therefore, this experiment indeed studies the behaviour of $\mathrm{var}(\hat\tau_p)$ under the null hypothesis.

For each size of the observation window $a$ and each scale of the random fields $\phi$ we generate 5\,000 independent realizations of $(Z_1, Z_2, Z_3, X, Y)$, where $Y$ is the process of sampling points, independent of the random fields $Z_i$ and the point process $X$. The process $Y$ is the homogeneous Poisson process on $W_a$ with intensity 100. The average number of sampling points ranges from 25 for $a=0.5$ to 1\,600 for $a=4$. The adaptive choice of bandwidth from Section~\ref{subsubsec:adaptive_bw} is used to compute $\hat\tau_p$. The sample variance of $\hat\tau_p$ is then determined from the independent realizations.

The sample variance multiplied by $|W_a|$ is given in Figure~\ref{fig:Rozptyly_LGCP_exp} (left) for the parametric smoothed residual field $s$ and Figure~\ref{fig:Rozptyly_LGCP_exp} (right) for the nonparametric smoothed residual field $\tilde{s}$. We observe that the plotted curves are nearly constant, and the correction factors $1/|W|$ are justified.

\begin{figure}[t]
    \centering
    \includegraphics[width=0.49\textwidth]{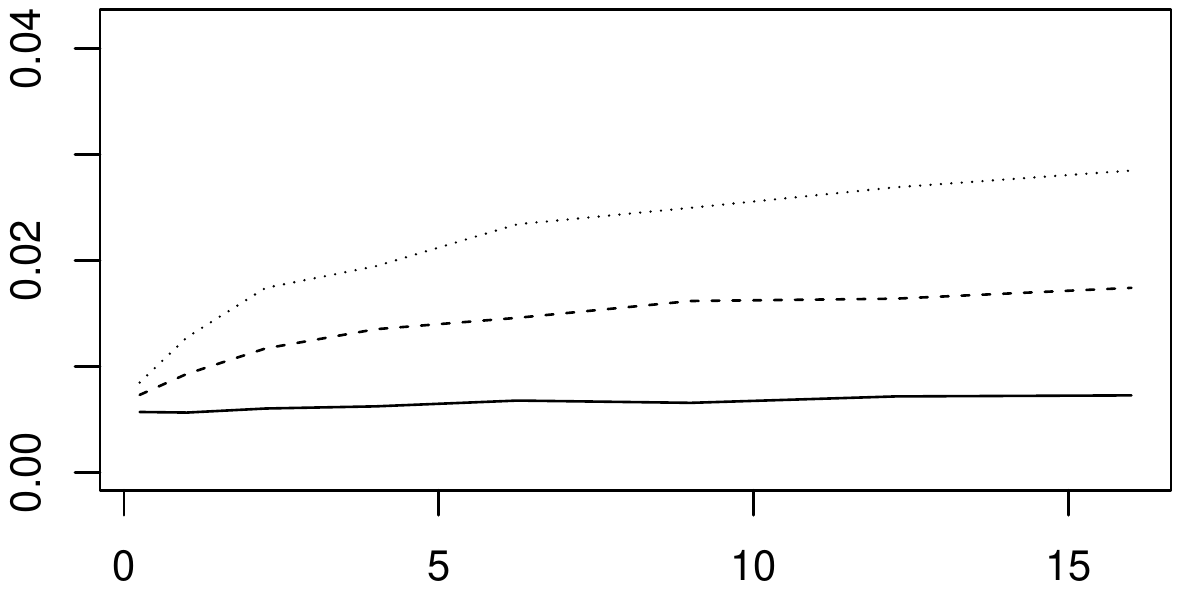}
    \includegraphics[width=0.49\textwidth]{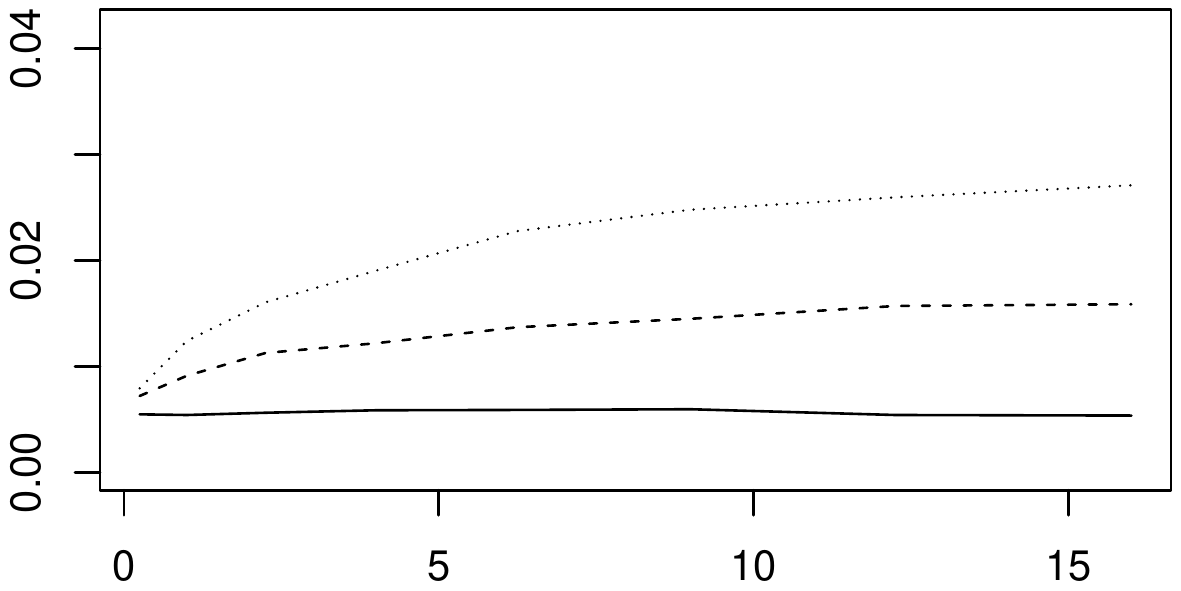}
    \caption{Sample variance of the test statistic $\hat\tau_p$, multiplied by the area of the observation window (vertical axis) plotted against the area of the observation window (horizontal axis). The scale parameter for the exponential correlation function of the random fields $Z_i$ is chosen to be 0.05 (solid line), 0.10 (dashed line) and 0.15 (dotted line), respectively. Left: for parametric residuals; right: for nonparametric residuals.}
    \label{fig:Rozptyly_LGCP_exp}
\end{figure}

\subsection{Covariate-weighted residual measure $CWR$}\label{sec:A2}

In the following proposition we show that for a Poisson process, in a simplified setting where the true intensity function $\lambda(u)$ is used to compute $CWR$, the variance of $CWR$ is a multiple of $\int_W \lambda(u) \, \mathrm{d}u$.

\begin{prop}
Let $X$ be a Poisson process on $W$ with the intensity function $\lambda(u), u \in W,$ and let $C(u), u \in W,$ be a stationary random field with $\mathbb{E} C(u)^2 = K < \infty$. Denote by $S$ the analogue of the covariate-weighted residual measure of $W$ from \eqref{eq:CWR}:
\begin{align*}
    S = \sum_{x \in X \cap W} C(x) - \int_W C(u) \lambda(u) \, \mathrm{d}u.
\end{align*}
Then $\mathrm{var} \, S = K \int_W \lambda(u) \, \mathrm{d} u.$
\end{prop}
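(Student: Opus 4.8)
The plan is to condition on the realization of the random field $C$ and then exploit the Poisson structure of $X$ through Campbell's theorem. Assuming, as in the rest of the paper, that the covariate field $C$ is independent of the point process $X$, conditionally on $C$ the process $X$ remains a Poisson process on $W$ with intensity $\lambda$, while $C(\cdot)$ plays the role of a fixed (deterministic) weight function. Writing $\mathbb{E}[\cdot \mid C]$ and $\mathrm{var}(\cdot \mid C)$ for conditional expectation and variance given $C$, I would first record the two Campbell formulas for the weighted sum $\sum_{x \in X \cap W} C(x)$ over a Poisson process: its conditional mean equals $\int_W C(u) \lambda(u)\, \mathrm{d}u$ and its conditional variance equals $\int_W C(u)^2 \lambda(u)\, \mathrm{d}u$.

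The first formula immediately shows that the deterministic correction term subtracted in $S$ is exactly the conditional mean of the sum, so that $\mathbb{E}[S \mid C] = 0$ almost surely; in particular $S$ is centred. The second formula then gives $\mathrm{var}(S \mid C) = \int_W C(u)^2 \lambda(u)\, \mathrm{d}u$. The next step is the law of total variance,
\begin{align*}
\mathrm{var}\, S = \mathbb{E}\bigl[ \mathrm{var}(S \mid C) \bigr] + \mathrm{var}\bigl( \mathbb{E}[S \mid C] \bigr),
\end{align*}
whose second term vanishes because $\mathbb{E}[S \mid C] = 0$.

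It then remains to compute $\mathbb{E}\bigl[ \int_W C(u)^2 \lambda(u)\, \mathrm{d}u \bigr]$. Exchanging expectation and integration by Tonelli's theorem (legitimate since the integrand is nonnegative) yields $\int_W \mathbb{E}[C(u)^2] \lambda(u)\, \mathrm{d}u$, and stationarity of $C$ gives $\mathbb{E}[C(u)^2] = K$ for every $u \in W$. The integral therefore collapses to $K \int_W \lambda(u)\, \mathrm{d}u$, which is the claimed identity.

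The only genuine ingredient is Campbell's theorem for the conditional variance of a Poisson integral; the remaining steps (the law of total variance, the Fubini/Tonelli exchange, and the use of stationarity) are routine. The mild points that I would be careful to state are the independence of $C$ and $X$ that justifies the conditioning, and the integrability needed for the exchange of expectation and integral, which follows from $K < \infty$ together with $\int_W \lambda(u)\, \mathrm{d}u < \infty$ on the compact window $W$.
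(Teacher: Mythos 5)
Your proof is correct and follows essentially the same route as the paper: both condition on the covariate field $C$, apply Campbell-type formulas for the Poisson process, and finish with Fubini/Tonelli and stationarity to extract the factor $K$. The only difference is organizational --- you invoke the standard conditional variance formula $\mathrm{var}\bigl(\sum_{x} C(x) \mid C\bigr) = \int_W C(u)^2 \lambda(u)\,\mathrm{d}u$ together with the law of total variance, whereas the paper derives the same identity by expanding $\mathbb{E} S^2$ into three terms and using the factorization $\lambda_2(u,v) = \lambda(u)\lambda(v)$ of the second-order product density; your explicit mention of the independence of $C$ and $X$ needed to justify the conditioning is a welcome extra precision.
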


\begin{proof}
By conditioning on $C$ and using Campbell's theorem, we obtain that
\begin{align*}
    \mathbb{E} \sum_{x \in X \cap W} C(x) = \mathbb{E} \int_W C(u) \lambda(u) \, \mathrm{d} u
\end{align*}
and hence $\mathbb{E} S = 0.$ We proceed by expressing
\begin{align*}
    \mathrm{var} \, S = \mathbb{E} S^2 = & \mathbb{E} \left( \sum_{x \in X \cap W} C(x) \right)^2 - 2 \mathbb{E} \left( \sum_{x \in X \cap W} C(x) \right) \left( \int_W C(u) \lambda(u) \, \mathrm{d} u \right) \\ & + \mathbb{E} \left( \int_W C(u) \lambda(u) \, \mathrm{d} u \right)^2
\end{align*}
and compute each term separately. First, using conditioning, Campbell's theorem, and the fact that for the Poisson process the second-order product density has the form $\lambda_2(u,v) = \lambda(u) \lambda(v), u,v \in W$, we obtain
\begin{align*}
    \mathbb{E} \left( \sum_{x \in X \cap W} C(x) \right)^2 & = \mathbb{E} \sum_{x \in X \cap W} C(x)^2 + \mathbb{E} \sum_{x, y \in X \cap W}^{\neq} C(x) C(y) \\
    & = \mathbb{E} \int_W C(u)^2 \lambda(u) \, \mathrm{d} u + \mathbb{E} \int_W \int_W C(u) C(v) \lambda(u) \lambda(v) \, \mathrm{d} u \, \mathrm{d} v.
\end{align*}

Similarly,
\begin{align*}
    -2 \mathbb{E} \left( \sum_{x \in X \cap W} C(x) \right) \left( \int_W C(u) \lambda(u) \, \mathrm{d} u \right) = - 2 \mathbb{E} \int_W \int_W C(u) C(v) \lambda(u) \lambda(v) \, \mathrm{d} u \, \mathrm{d} v,
\end{align*}
and finally
\begin{align*}
    \mathbb{E} \left( \int_W C(u) \lambda(u) \, \mathrm{d} u \right)^2 = \mathbb{E} \int_W \int_W C(u) C(v) \lambda(u) \lambda(v) \, \mathrm{d} u \, \mathrm{d} v.
\end{align*}
Altogether, we obtain
\begin{align*}
    \mathrm{var} \, S = \mathbb{E} \int_W C(u)^2 \lambda(u) \, \mathrm{d} u.
\end{align*}
Using Fubini's theorem and stationarity of $X$, we get the desired expression $\mathrm{var} \, S = K \int_W \lambda(u) \, \mathrm{d} u.$ \hfill \qedsymbol
\end{proof}

The variance of $S$ is proportional to $\int_W \lambda(u) \, \mathrm{d} u$ which is the expected number of points in $W$. In practical situations, this quantity is not known and can be estimated by the observed number of points $n(X \cap W)$.

If the intensity function is bounded from above and from below by finite positive constants, $\int_W \lambda(u) \, \mathrm{d} u$ is of order $|W|$ for large observation windows. We take advantage of this in the following simulation experiment where we determine the sample variance of $CWR$ from a set of independent realizations and standardize it by $|W|$ instead of $n(X \cap W)$ which is different for individual realizations. Furthermore, following the ideas from Theorem~1 of \citet{Dvorak2022} it can be shown that the variance of the sum in \eqref{eq:CWR} is of order $|W|$ under reasonable assumptions.

%Concerning the covariate-weighted residual measure of $W$, we focus on the sum in \eqref{eq:CWR} as it is dominant for the variance of $CWR$ for large observation windows \textcolor{blue}{zkusit experimentálně}. Following the ideas from Theorem~1 of \citet{Dvorak2022} it can be shown that the variance of the sum is of order $|W|$ under reasonable assumptions. This is consistent with the variance of a sum of i.i.d. random variables.

We have performed the same simulation experiment as in Section~\ref{app:A1} for $CWR$. The sample variance of $CWR$ divided $|W_a|$ is given in Figure~\ref{fig:Rozptyly_LGCP_exp_CWR}, indicating that the variance correction factor $|W|$ correctly captures the variability of $CWR$ across different realizations.

\begin{figure}[t]
    \centering
    \includegraphics[width=0.49\textwidth]{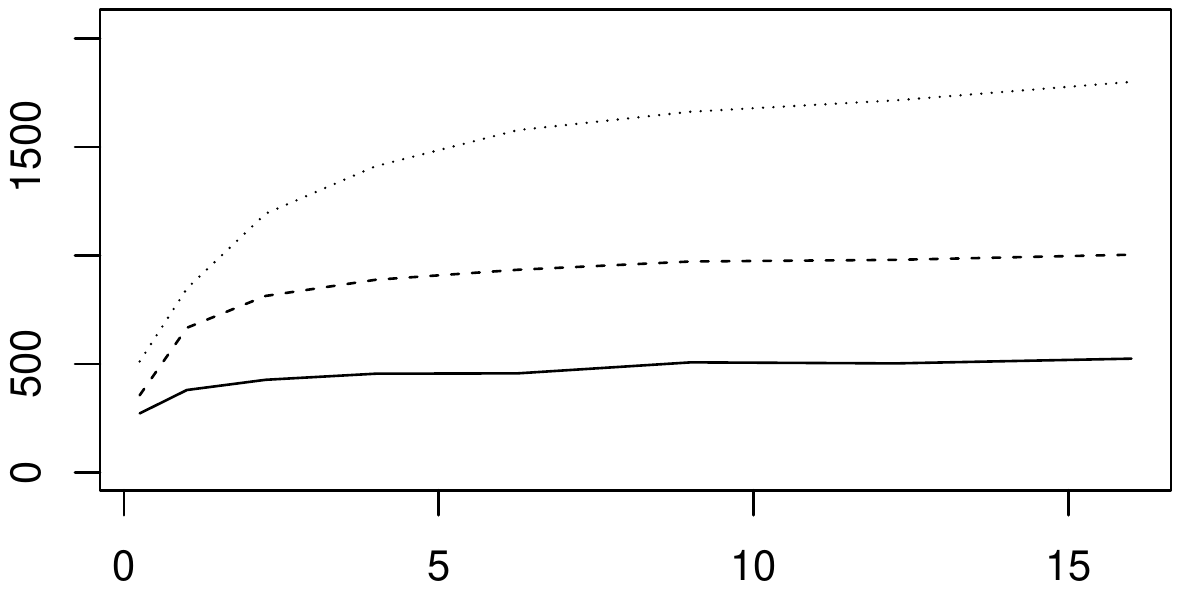}
    \includegraphics[width=0.49\textwidth]{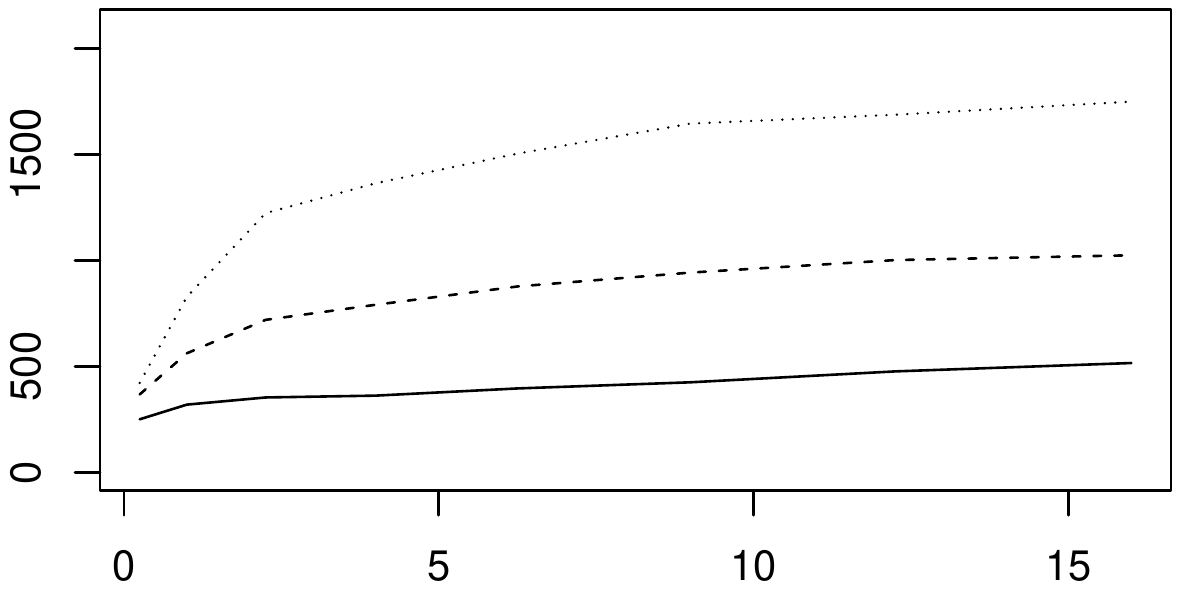}
    \caption{Sample variance of the test statistic $CWR$, divided by the area of the observation window (vertical axis) plotted against the area of the observation window (horizontal axis). The scale parameter for the exponential correlation function of the random fields $Z_i$ is chosen to be 0.05 (solid line), 0.10 (dashed line) and 0.15 (dotted line), respectively. Left: for parametric residuals; right: for nonparametric residuals.}
    \label{fig:Rozptyly_LGCP_exp_CWR}
\end{figure}

\end{document}